\newtheorem{definition}{Definition}[section]
\newtheorem{lemma}{Lemma}[section]
\newtheorem{theorem}{Theorem}[section]
\newtheorem{corollary}{Corollary}[section]
\definecolor{mygray11}{gray}{.99}
\definecolor{mygray10}{gray}{.89}
\definecolor{mygray9}{gray}{.79}
\definecolor{mygray8}{gray}{.69}
\definecolor{mygray7}{gray}{.59}
\definecolor{mygray6}{gray}{.49}
\definecolor{mygray5}{gray}{.39}
\definecolor{mygray4}{gray}{.29}
\definecolor{mygray3}{gray}{.19}
\definecolor{mygray2}{gray}{.1}
\definecolor{mygray}{gray}{.01}
\definecolor{mygray7test}{gray}{.973}
\definecolor{mygray6test}{gray}{.906}
\definecolor{mygray5test}{gray}{.7532}
\definecolor{mygray4test}{gray}{.5998}
\definecolor{mygray3test}{gray}{.42}
\definecolor{mygray2test}{gray}{.25}
\definecolor{mygraytest}{gray}{.01}
\begin{document}

%
\title{\huge A Simple Algorithm for Computing BOCP  }
%
%
%
%

\author{Jack~Wang 
\IEEEcompsocitemizethanks{\IEEEcompsocthanksitem Contact information: cszjwang@gmail.com. 
\protect\\

}
\thanks{}}

\IEEEcompsoctitleabstractindextext{%
\begin{abstract}
In this article,   we  devise a concise  algorithm for computing BOCP. Our method is  simple, easy-to-implement but without loss of efficiency.  Given two circular-arc polygons with $m$ and $n$ edges respectively, our method runs in $O(m+n+(l+k)\log l)$ time, using $O(m+n+k)$ space, where  $k$ is the number of intersections, and $l$ is the number of {edge}s. Our algorithm has the power to approximate to linear complexity  when $k$ and $l$ are small.   The superiority of the proposed algorithm is also validated through  empirical study.   
\end{abstract}

}

\maketitle

\IEEEdisplaynotcompsoctitleabstractindextext

%
\IEEEpeerreviewmaketitle





\section{Introduction}\label{sec:1}
{B}oolean operation on polygons is one of the oldest and best-known problems in computer graphics, and it has attracted much   attentions, due to its simple formulation and broad applications in  various disciplines  such as computational geometry, CAD, GIS, visual computing, motion planning  \cite{MarkdeBerg:computational,JamesDFoley:computer,YvonGardan:anAlgorithm,JohnEHopcroft:robust,Francisco:aNew,KevinWeiler:HiddenSurface,ZhijieWang:prqumo,ZhijieWang:prqumoUsingNothing}.   
When the polygons to be operated are  \textit{conic polygons} (whose boundaries consist of  conic segments or second degree curves), researchers have  made some efforts, see, e.g.,   \cite{EricBerberich:aComputational,EricBerberich:aGeneric,Yong-XiGong:Boolean}.  
The conic polygon has several  special or degenerate cases: (\romannumeral 1)  the linear polygon (known as traditional polygon), whose boundaries consist of only linear curves, i.e., straight line segments; and (\romannumeral 2) the circular-arc polygon, whose boundaries consist of circular-arcs and/or straight line segments.  
The  natural problem --- boolean operation on traditional polygons ---  has been extensively investigated,  see e.g., \cite{RDAndreev:Algorithm,GuntherGreiner:efficient,DKrishnan:systolic,You-DongLiang:AnAnalysis,YongKuiLiu:AnAlgorithm,Patrick-GillesMaillot:aNew,AvrahamMargalit:Analgorithm,Francisco:aNew,YuPeng:aNew,AriRappoport:AnEfficient,MRivero:Boo,BalaRVatti:aGeneric}. However, in existing literature (almost) no  article  \textit{focuses  on} another natural problem --- boolean operation on circular-arc polygons.   In fact, boolean operation  on  circular-arc polygons also has   many applications.  
For instance, deploying sensors to ensure wireless coverage is an important problem \cite{SeapahnMeguerdichian:coverage,bangWang:coverage}. The sensing range of a single sensor is  a circle. With polygonal obstacles, its sensing range is cut off, shaping  a circular-arc polygon.  When we  verify the wireless coverage range of  sensors, boolean operation on  circular-arc polygons is needed.  As another example, assume there are a group of free-rotating cameras used to  monitor a supermarket. The visual range of a single camera can be regarded as a circle, as it is to be freely rotated.  Various obstacles such as goods shelves usually impede the visions of  cameras,  here boolean operation can be used to check the blind angles. Last but not least, consider  a group of free-moving robots used to guide the visitors in a museum. Since the energy of a single robot is limited, its movable region is restricted to a circle. With the impact of various obstacles such as exhibits, the original movable region is to be cut off by these obstacles. When we verify if every place in the museum can be served by at least a robot,  boolean operation on  circular-arc polygons is also needed.

Although the solution used to handle conic (or more general) polygons can  also work for  its special cases, the special cases, however, usually have their unique properties;  directly executing the  algorithm used to handle  conic (or more general) polygons is usually not efficient enough. It is just like  when the applications only involve  traditional polygons, we usually incline to use the solutions targeted for traditional polygons rather than  the ones for  conic (or more general) polygons. With the similar argument, when the applications only involve  circular-arc polygons, a targeted solution  for  circular-arc polygons should be favourable for potential users. 

Motivated by these, this paper makes the effort  to the problem of \textit{boolean operation on circular-arc polygons}. In particular, we are interested in developing   algorithms with the following features: (\romannumeral 1)   easy-to-implement for deployment in practice, and (ii)   having nice theoretical guarantees. To this end, first of all  a concise and easy-to-operate data structure is naturally developed (Section \ref{subsec:datastructure});  based on this concise  structure, we 
then propose an  algorithm dubbed as R{\footnotesize E2L} that consists of three main steps. 

The first step  is the kernel (or core) of R{\footnotesize E2L}, yielding two \textit{special} sequence lists. Specifically, the kernel  integrates three    simple yet efficient  strategies:  
(\romannumeral 1) it introduces the concept of {\underline{\textbf{r}}elated \underline{\textbf{e}}dges}, which is used  to avoid irrelevant computation as much as possible; (\romannumeral 2) it employs two \textit{special} sequence \underline{\textbf{l}}ists, each  one is a compound structure with three domains; they are used  to let the  \textit{decomposed arcs},  intersections and \textit{processed related edges} be well organized, and  thus immensely simplify the subsequent computation; and  (\romannumeral 3)  it assigns {two \underline{\textbf{l}}abels}  to each processed  related edge  before the edge is placed into a  balanced tree; this  contributes to  avoiding the ``false'' intersections being reported, and speeding up the process of inserting the reported intersections into their corresponding edges (Section \ref{sec:our solution}). 
The second  step  produces two \textit{new} linked lists in which  the intersections, appendix points, and original vertices have been arranged, and the decomposed arcs have been merged. To obtain these two new linked lists, two important but easy-to-ignore issues, ``inserting new appendix points'' and ``merging the decomposed arcs'', are addressed (Section \ref{sec:two new linked lists}). 
The third step is to obtain the resultant (or output) polygon by traversing these two new linked lists. In order to correctly traverse them,  the \textit{entry-exit} properties are naturally adopted, and three  traversing rules are developed   (Section \ref{sec:traversing}).

Viewed from  a macro perspective, similar to many methods (see e.g., \cite{Sutherland:reentrant,You-DongLiang:AnAnalysis,RDAndreev:Algorithm, BalaRVatti:aGeneric,GuntherGreiner:efficient,YongKuiLiu:AnAlgorithm}) in the literature, our solution  also partially  inherits two  well-known proposals:  Bentley-Ottmann Plane Sweep algorithm \cite{JonLouisBentley:Algorithms}  and   Weiler-Atherton Clipping algorithm \cite{KevinWeiler:HiddenSurface}, whereas we also  advance  existing results from various aspects. To summarize,    we make the following main  contributions. 


\begin{enumerate*}
\item We highlight  the circular-arc polygon is one of special cases of the conic polygon, and boolean operation on circular-arc polygons also has many applications. 
\item We devise a concise and easy-to-operate data structure, and develop a targeted algorithm for boolean operations on circular-arc polygons.  
\item While this paper focuses on boolean operations of circular-arc polygons, we show our techniques can be easily extended to compute boolean operations of other types of polygons (Section \ref{sec:extension}).
\item We provide the rigorous and detailed theoretical analysis for our  algorithm. In brief, given two circular-arc polygons with $m$ and $n$ edges respectively, our algorithm runs in $O(m+n+(l+k)\log l)$ time, using $O(m+n+k)$ space, where  $k$ is the number of intersections, and $l$ is the number of {related edge}s  (Section \ref{sec:complexity}).
\item We conduct extensive experiments to demonstrate the efficiency and effectiveness of our solution  (Section \ref{sec:experiment}). 
\end{enumerate*}

The novelty of our work is threefold:  to  the best our knowledge  (\romannumeral 1) it is the first comprehensive study on boolean operations of  circular-arc polygons; (\romannumeral 2) it is the first time to employ the idea    ``utilizing related edges'' for boolean operations of polygons, this technique   is simple enough to be practical value; and (\romannumeral 3)  it is the first output-sensitive algorithm having the potential to approximate to linear complexity for   boolean operations of polygons.  

Next, we review previous works most related to ours, and then present  our  algorithm including rigorous theoretical analysis and extensive empirical study.

\section{Related Work}\label{sec:related work}

We first clarify several technical terms for ease of presentation. It is well known that there are three typical boolean operations: intersection, union, and difference. Note that \textit{polygon clipping} mentioned in many papers  is actually to compute the \textit{difference} of two polygons \cite{BalaRVatti:aGeneric}. Given two polygons, the one  to be clipped is called the \textit{subject polygon},  another  is usually called the \textit{clip polygon} or \textit{clip window} \cite{You-DongLiang:AnAnalysis,Sutherland:reentrant,BalaRVatti:aGeneric}. Given a polygon, if there is a pair of non-adjacent edges  intersecting with each other,   this polygon is usually called the \textit{self-intersection} polygon \cite{RDAndreev:Algorithm,GuntherGreiner:efficient,MRivero:Boo}. Throughout this paper, the \textit{traditional polygon} refers to the polygon whose boundaries consist of \textit{only} straight line segments, while the \textit{circular-arc polygon} refers to the polygons whose  boundaries  consist of  circular arcs, or  both  straight line segments and circular arcs. 
We are now ready to  review the previous works most related to ours.

\subsection{Boolean Operations on Traditional Polygons} 
\label{subsec:booleanOpeTrandition}
In existing literature, there are many papers studying  boolean operation of traditional polygons. For example,  Sutherland-Hodgeman  \cite{Sutherland:reentrant}  proposed an elegant  algorithm dealing with the case when the \textit{clip polygon}  is convex. Liang et al.  \cite{You-DongLiang:AnAnalysis}  gave elaborate analysis on the case when the \textit{clip polygon} is rectangular. Andreev \cite{RDAndreev:Algorithm} presented an algorithm dealing with the case when  the \textit{subject polygon} is with  holes and self-intersections. Vatti \cite{BalaRVatti:aGeneric}  and Greiner-Hormann \cite{GuntherGreiner:efficient} proposed general algorithms that can  handle  concave polygons with holes and self-intersections for both the clip and the subject polygons.  Later, Liu et al. \cite{YongKuiLiu:AnAlgorithm} further optimized  Greiner-Hormann's  algorithm. Rivero-Feito  \cite{MRivero:Boo}   achieved  boolean operation of polygons based on  the concept of  \textit{simplicial chains}. Peng et al. \cite{YuPeng:aNew} also adopted {simplicial chains}, and  improved Rivero-Feito's algorithm. Recently,  Martinez et al. \cite{Francisco:aNew} proposed to  subdivide the edges at the intersection. These works lay a solid foundation for the future research. Compared to these works, this paper  focuses on boolean operation of circular-arc polygons, and thus is  different from theirs.

\subsection{Boolean Operations on Conic/General Polygons} 
\label{subsec:booleanOnGeneralPolygon}
Researchers have also made some efforts on boolean operations of conic polygons. For example, 
Berberich et al. \cite{EricBerberich:aComputational} proposed to decompose non-x-monotone curves and compute the arrangement of segments using the plane sweep method, and then   compute the \textit{overlap} of two polygons using the results of arrangement, in order to achieve boolean operations. Gong et al. \cite{Yong-XiGong:Boolean} achieved boolean operation of conic polygons using the topological relation between two conic polygons, this method does not require \textit{x-monotone} conic arc segments. Both  algorithms can support boolean operation of circular-arc polygons, as the conic polygon is the general case of the circular-arc polygon. 
Moreover, the computational geometry algorithms library   (C{\footnotesize GAL}) \cite{fogel:cgalArrangements}  can also support boolean operation of circular-arc polygons. Inspecting the source codes of C{\footnotesize GAL}, we realize that its idea is  directly invoking   the algorithm of boolean operation on \textit{general polygons},  defined as \texttt{General} \texttt{Polygon\_2} in C{\footnotesize GAL}{\footnotesize \footnote{\footnotesize More information please refer to the site: \url{http://www.cgal.org}}}. To some extent the general polygon can be considered as the most general case, as  its edges  can be line segments, circular arcs, conic curves, cubic curves, or even more complicated curves. Although  the essence of the algorithm in C{\footnotesize GAL} is basically similar to that in \cite{EricBerberich:aComputational} (using  the \textit{plane sweep method} to compute the intersections, and the D{\footnotesize CEL}  structure to represent the  polygons),      C{\footnotesize GAL} is a very powerful and  useful library collecting  many classical  ideas. For example,  Emiris et al. \cite{IoannisZEmiris:towardsAndOpen} developed a kernel, for curved objects and related operations, that was targeted for inclusion in  C{\footnotesize GAL}.  Note that, the C{\footnotesize GAL} project itself also yields many  nice papers  in which boolean operation on polygons with  curves is    mentioned, see e.g., \cite{EricBerberich:aGeneric,RonWein:Exact}, to mention just a few.   These excellent works  are the cornerstones of our study, giving us a lot of inspiration.

Compared to these works, our work is different from theirs in the following  aspects at least.  
First, this paper focuses on one of special cases of conic polygon;   specially, we give insights into its unique properties, design a concise data structure customized for  this special case, and develop a targeted algorithm, in which the central idea  `utilizing related edges' (accompanied with a set of well-designed strategies)   is proposed.  To our knowledge, it is the first time to employ this technique for  boolean operations on polygons. Moreover, we give the rigorous theoretical analysis for our algorithm, which runs in  $O( m+n+(l+k)\log l  )$ time, and  approximates to linear complexity when $k$ and $l$ are small (notice: the best   known result for polygon boolean operation   runs in $O( (m+n+k)\log (m+n))$ time, which is no better than \textit{linearithmic
time}{\small \footnote{Simply speaking, linearithmic
time in Big O notation refers to $O(N\log N)$, provided that the input  is $O(N)$ size. }} even if $k$  is small); its superiority is also verified by extensive experiments. 


\section{Preliminary} \label{sec:warmup}
\vspace{-1ex}

\subsection{Data Structure}\label{subsec:datastructure}

\vspace{-1ex}
It is well known that the  traditional polygon can be represented by a series of vertices. This method however  is invalid for   polygons containing  circular arcs, as  two vertices  cannot exactly determine  a circular arc segment (note: it may be a major or minor arc). Even so, this ambiguity can be easily eliminated by adding  an \textit{appendix point}, where the appendix point can be an arbitrary point that is located on the  arc but it is not the endpoints of the arc. 
For clarity,  a traditional vertex  is denoted by $v_i$, and an appendix point  is denoted by $\widetilde{v_j}$. For example, $\{v_1,\widetilde{v_2},v_3,v_4,v_5\}$ determine a circular-arc polygon with four edges (including one circular arc segment $\widehat{v_1\widetilde{v_2} v_3}$ and three straight line segments $\overline{v_3v_4}$, $\overline{v_4v_5}$, $\overline{v_5v_1}$).  Unless stated otherwise, in the rest of the paper we always use $\overline{~\cdot~}$ and $\widehat{~~\cdot~~}$ to denote the line segment and the arc segment, respectively.  
In order to efficiently operate   circular-arc polygons, we devise a  data structure called A{\footnotesize  PDLL} (appendix point based doubly linked list). Specifically, each node in the   list consists of several domains below.
\vspace{-1ex}

\begin{itemize*}
\item \textsf{\small Data:} $(x,y)$,  the coordinates of a  point.
\item \textsf{\small Tag:} Boolean type, it  indicates whether this point is a traditional vertex or an appendix point.
\item \textsf{\small Crossing:} Boolean type, it indicates whether this point is an intersection.
\item \textsf{\small EE:} Boolean type, it indicates what property (\textit{entry} or \textit{exit}) an intersection  has.  
\item \textsf{\small Prev:} Node pointer, it points to the previous node.
\item \textsf{\small Next:} Node pointer, it points to the next node.
\end{itemize*}

\vspace{-2ex}


\subsection{Observation}\label{subsec:observation}
In this subsection, we  introduce a simple yet important observation that  will be  frequently used  later.    To explain, we need some preliminaries.  

\vspace{-1ex}

\begin{definition}
{({\upshape Non-x-monotone} circular arc)}
Given any circular arc, it is a non-x-monotone circular arc such that there is at least one vertical line that intersects with the circular arc at two points. 
\end{definition}

\vspace{-2ex}

\begin{definition}
{ ({\upshape X-monotone} circular arc)} 
A circular arc is an x-monotone circular arc   such  that there is at most  one intersection  with any vertical line.
\end{definition}

\vspace{-1ex}

Lemma \ref{lemma:at least at most} below  formalizes our observation, which can be viewed as a unique property of circular-arc polygons (compared to other types of polygons). 

\vspace{-1ex}

\begin{lemma}
Let $N_{xmc}$ be an arbitrary  non-x-monotone circular arc,  and $C$ be its corresponding circle. Assume that $l_h$ is  a horizontal line  passing through  the center of $C$. We have that $l_h$ can decompose  $N_{xmc}$ into  at least two and at most three  x-monotone arcs.
\label{lemma:at least at most}
\end{lemma}

\vspace{-2ex}

\begin{proof}
It is immediate by  \textit{analytic geometry}.  
\end{proof}


\vspace{-4ex}

\section{The Kernel of RE2L}  \label{sec:our solution}
In this section, we detail Step 1  of our solution. Specifically, we first expatiate  the main ideas integrated in Step 1 (Sections \ref{subsec:relatedEdge}-\ref{subsec:avoidingFalse}), and then  present the detailed algorithms (Section \ref{subsec:Step1Algorithm}).

\vspace{-2ex}
\subsection{Utilizing Related Edges}\label{subsec:relatedEdge}

One of our strategies is to choose \textit{related edges} (defined later) before doing others. The  purpose of choosing \textit{related edges} is to avoid   operations that are irrelevant with obtaining the final result as much as possible. To  define related edges formally, we need two notions.

\begin{definition}
{\upshape (Extended boundary lines)} 
Given a circular-arc  polygon, w.l.o.g. (without loss of generality), assume the coordinates of left-bottom  corner of its  MBR  (minimum bounding rectangle) are ($x_1$,$y_1$),  the one of right-top corner are  ($x_2$,$y_2$). Then, the following four lines, X=$x_1$, X=$x_2$, Y=$y_1$, Y=$y_2$ are respectively the left, right, bottom and top extended boundary lines of this circular-arc polygon. 
\end{definition}

\begin{definition}
{\upshape (Effective axis)} 
Let  $I_{mm}$ be the intersection set of two circular-arc polygons' MBRs. If the horizontal span of $I_{mm}$ is larger or equal to its vertical span; then, the y-axis is the effective axis. Otherwise, the x-axis is the effective axis.
\end{definition}

We now provide the formal definition and   inspect more properties of related edges.  

\begin{definition}
\label{def:related edges}
{\upshape (Related edges)}
Let $l_1$($l_2$) and $r_1$($r_2$) be the left and right  extended boundary lines of the circular-arc polygon $\mathscr P _1$($\mathscr P _2$), respectively. W.l.o.g.,  assume the effective axis is the x-axis and $l_1< l_2<r_1< r_2$, where $l_1<l_2$ denotes  $l_1$ is in the left of $l_2$. Then, the following edges are related edges:    (\romannumeral 1) edges  located between   $l_2$ and $r_1$; or (2)  edges  intersected with $l_2$ or $r_1$. 
\end{definition}

See Figure \ref{fig:many edges a} for an example, edges  $\overline{ab}$ and $\overline{bc}$ are \textit{related edges} as they intersect with $l_2$. Similarly, edges $\overline{de}$ and $\overline{ef}$  are also \textit{related edges}. We remark that in Definition \ref{def:related edges}  there  are actually other cases, e.g., ``$l_1< l_2<r_2< r_1$'' or the effective axis is the y-axis;  these cases are similar to the listed case,  omitted.

\begin{figure}[h]
  \centering
  \subfigure[\scriptsize { } ]{\label{fig:many edges a}
     \includegraphics[scale=.5]{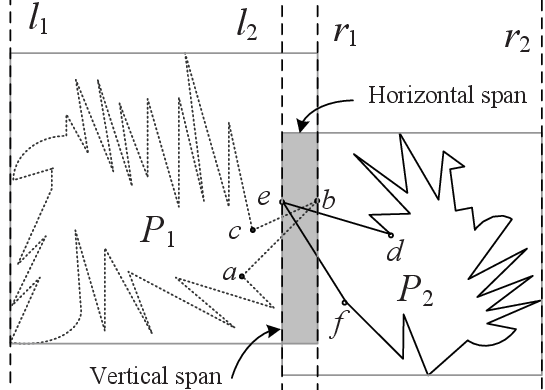}} 
    \hspace{0ex}
  \subfigure[\scriptsize { }]{\label{fig:many edges b}
      \includegraphics[scale=1]{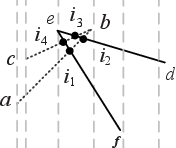}} 
      \vspace{-2ex} 
\caption{\small  Example of \textit{related edges}. (a) Two big rectangles denote the MBRs; the grey rectangle denotes the intersection set of two MBRs, and the dashed vertical lines  denote the extended boundary lines. (b) Partial enlarged drawing. } 
 \label{fig:many edges}
\end{figure}

\begin{definition}\label{definition:processedRelatedEdges}
{\upshape (Processed related edges)}
Given a number  of  related edges, we decompose them if there are non-x-monotone arcs, we call all the edges (after decomposing)  the processed related edges.
\end{definition}

By Lemma \ref{lemma:at least at most} and Definition \ref{definition:processedRelatedEdges}, we have the following corollary (which will be used later).
\begin{corollary}\label{lemma:processed related edges}
Given  $l$  related edges, if there is no non-x-monotone arc among them, the number of processed related edges is $l$. Otherwise, the number of processed related edges is larger than $l$ and  no more than $3l$.
\end{corollary}

Up to now, we have discussed the properties of related edges, and briefly explained how to choose related edges from two circular-arc polygons (remark: more explanations will be given in Algorithm 1 and in the proof of Lemma \ref{theorem:modified plane sweep algorithm complexity}). We next show how to use two sequence lists to manage the 
\textit{processed related edges} and other important components.


\vspace{-1ex}
\subsection{Managing Important Components}\label{subsec:two sequence lists}
The main purpose of the two sequence lists (i.e., arrays) is to let the processed related edges, intersections and decomposed arcs be well organized, which can facilitate the subsequent operations. Specifically,  
each item in the two sequence lists is a compound structure consisting of three domains: (\romannumeral 1) the  {processed related edge}; (\romannumeral 2) the  intersections (if exist) on this edge; and (\romannumeral 3) a tri-value switch. For ease of discussion, we denote by $S_1$ and $S_2$  the two sequence lists,  by $S_i[j]$ the $j$th item in $S_i$ ($i\in {1,2}$), and  by $S_i[j].a$, $S_i[j].b$ and $S_i[j].c$  the three domains of $S_i[j]$, respectively. 

\begin{figure}[h]  
\centering
      \includegraphics[scale=.6]{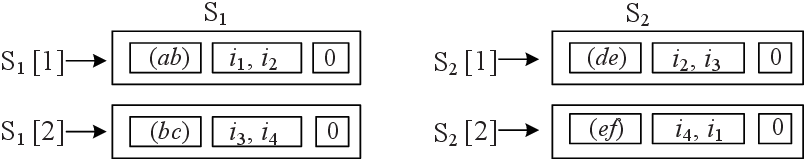} 
      \vspace{-1ex} 
 \caption{\small  Example of \textit{sequence lists}.} 
 \label{fig:sequence list}
\end{figure}

The {processed related edges} in each sequence list are stored in counter-clockwise direction with regard to the original circular-arc polygon. For example, regarding to circular-arc  polygons in Figure \ref{fig:many edges}, we  construct two sequence lists as shown in Figure \ref{fig:sequence list}.  
Note that when there are multiple intersections on an edge, we should keep these intersections in order. See $S_1[1].b$ of Figure  \ref{fig:sequence list} for an example, the point $i_1$ is ahead of the point $i_2$.    
Regarding to the third domain $S_i[j].c$, it  is  assigned to either 0,  1, or 2. The assignment rules are as follows. When the edge is not a decomposed arc, we assign ``0'' to $S_i[j].c$. In this example, for any  $1\leq j\leq |S_i|$ (where $|\cdot|$ denotes the cardinality of $S_i$), $S_i[j].c$ is set to 0, as there is no decomposed arc. Otherwise, we assign ``1'' or ``2'' to $S_i[j].c$.
The readers may be curious why we use two different values. The purpose is to  differentiate the decomposed arcs which are from  different non-x-monotone arcs. 
This can help us efficiently merge them in the future. (The specific steps on how to merge them will be discussed in Section \ref{sec:two new linked lists}.)  
Given a series of decomposed arcs,  we  assign ``1'' to each decomposed arc  that is from the \textit{odd} (1st, 3rd,  $\cdots$) non-x-monotone arc, and assign ``2'' to each  decomposed arc that is from the \textit{even} (2nd, 4th,  $\cdots$) non-x-monotone arc.  

See Figure \ref{fig:third domain a} for an example, there are   five   \textit{related edges} in $\mathscr P_1$. Furthermore,  Figure \ref{fig:third domain b} illustrates  eight  \textit{processed related edges} (after  we decompose them based on Lemma \ref{lemma:at least at most}), implying that  $|S_1|=8$. Based on the assignment rules,  the values of the third domains should be ``0, 1, 1, 2, 2, 1, 1, 0'', respectively.

So far, we have shown how to use two sequence lists to manage the processed related edges and intersections. Note that, in order to obtain the intersections, a standard technique   is the \textit{plane sweep method}  \cite{JonLouisBentley:Algorithms,MichaelIanShamos:Geometric}. In this paper, we do not directly use this algorithm. Instead, we modify it by adding \textit{two labels} to    avoid the ``false'' intersections being reported, and  to speed up the process of inserting the reported intersections into their corresponding edges. (Remark: here the \textit{false} intersections refer to the vertices of polygons.)  We next give a brief summary of the plane sweep algorithm, and then   show how the two labels work.

\textit{Plane sweep method.} 
Let $\mathscr Q$ be a priority queue,  $\mathscr R$ be a balanced tree{\small \footnote{It is not mandatory to use a  priority queue and a balanced tree, whereas they are usually being  recommended, for the sake of efficiency \cite{JonLouisBentley:Algorithms}. Moreover, both of them are  abstract concepts; the priority queue, for example, can be implemented with a heap or other methods.}}, and  $l_v$ be a vertical sweep line. The basic idea of the plane sweep method is  as follows. First, it sorts the endpoints of all segments according to their x-coordinates, and then puts them into  $\mathscr Q$. Next, it  sweeps the plane  (from left to right) using $l_v$. At each endpoint during this sweep, if an  endpoint is the left endpoint of a segment, it inserts this segment into  $\mathscr R$; in contrast, if it is the right endpoint of a segment, it deletes this segment from $\mathscr R$. Note that all the segments  intersecting with $l_v$  are stored (in order from bottom to top) in  $\mathscr R$. In particular, when  $l_v$ moves from one endpoint to another endpoint,   it always checks  whether or not newly adjacent segments  intersect with each other; If so, it computes the intersection. In this way, all intersections can be obtained finally{\small \footnote{Note that, in some cases the segments may be vertical line segments,  or the segments may be tangent, or many  segments possibly intersect at one point; for these degenerated cases,  please refer to the papers (e.g., \cite{JonLouisBentley:Algorithms,MichaelIanShamos:Geometric,GuntherGreiner:efficient,YongKuiLiu:AnAlgorithm,AriRappoport:AnEfficient})  for more details. Unless stated otherwise,  degenerated cases are processed using  existing techniques and/or a straightforward adaptation from existing techniques. We  no longer expatiate them  separately for saving space (as they are tedious, and are not the focus of the paper). }}.

\begin{figure}[t]
  \centering
  \subfigure[\scriptsize { } ]{\label{fig:third domain a}
     \includegraphics[scale=.35]{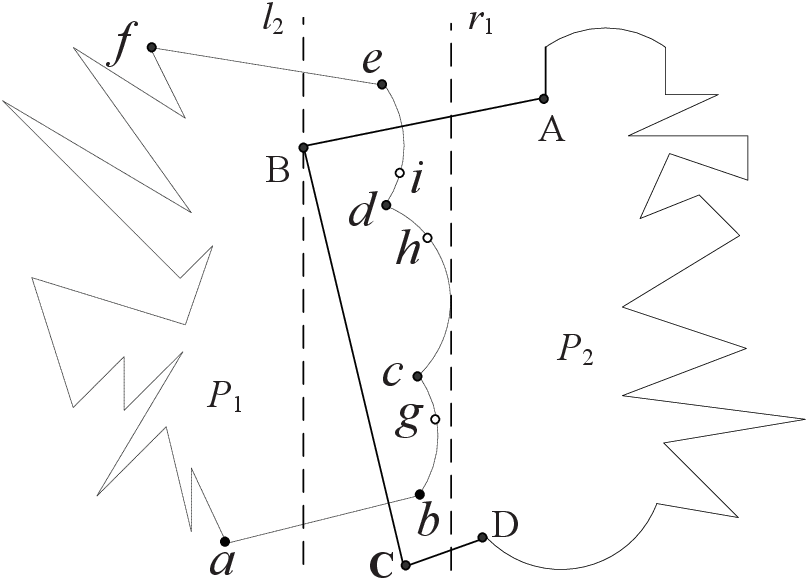}} 
     \hspace{-4ex}
  \subfigure[\scriptsize { }]{\label{fig:third domain b}
      \includegraphics[scale=.35]{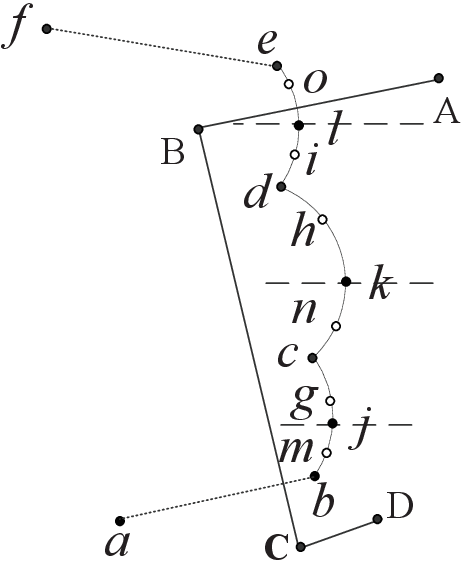}} 
      \vspace{-2ex} 
\caption{\small  Example of consecutive non-x-monotone circular arcs. (a) Edges $\overline{ab}$, $\widehat{b\widetilde{g} c}$, $\widehat{c\widetilde{h} d}$, $\widehat{d\widetilde{i} e}$, $\overline{ef}$  are  \textit{related edges} of $\mathscr P _1$. (b) Edges $\overline{ab}$, $\widehat{b\widetilde{m} j}$, $\widehat{j\widetilde{g} c}$, $\widehat{c\widetilde{n} k}$, $\widehat{k\widetilde{h} d}$, $\widehat{d\widetilde{i} l}$, $\widehat{l\widetilde{o} e}$, $\overline{ef}$  are  \textit{processed related edges of $\mathscr P _1$}; three dashed lines are the auxiliary lines. } 
 \label{fig:third domain}
\end{figure}

\vspace{-1ex}
\subsection{Avoiding False  Intersections and Speeding Up  Lookups}\label{subsec:avoidingFalse}

We can easily see that the  plane sweep method  directly inserts a segment into the balanced tree $\mathscr R$, if  the  point $p$ ($\in \mathscr Q$) is the left endpoint of the segment. Instead,  we assign two labels to the segment  before it is inserted into  $\mathscr R$.  Note that  the segment discussed  here  refers to the \textit{processed related edge}.
For clearness, we denote by $lb_1$ and $lb_2$ the two labels, respectively. 

$lb_1$ is the boolean type,  identifying that a segment is 
from which one of the two circular-arc polygons. Specifically,  if the segment is  from $\mathscr P_1$, we assign \textit{true} to $lb_1$; otherwise, we assign \textit{false} to $lb_1$.  
Recall that the plane sweep method always checks whether or not two segments 
intersect with each other, when they are adjacent. Our proposed method does not need to check them regardless of whether or not they intersect, if  the first labels of two adjacent edges have
the same value.
This  can avoid the unnecessary test and the ``false'' intersections.

$lb_2$ is an integer type denoting a serial number, which corresponds to the ``id''  of an item stored in the sequence list (note: the ``id'' information of each item is implied, as  we store the items using  the sequence list, i.e.,  array).   When  we detected an intersection, this label  can help us quickly find the item  in the sequence list,  and then insert the intersection into this item. 
See Figure \ref{fig:many edges b} for an example, $lb_1$ and $lb_2$  of edge $\overline{ab}$ are assigned to \textit{true} and   1, respectively. When we detected the intersection $i_1$, we thus can quickly know that we should insert the intersection into $S_1[1]$ (i.e., the first item of $S_1$). Otherwise,  we have to scan the sequence list in order to insert the intersection into an appropriate item, this way is inefficient especially when $|S_1|$ (or $|S_2|$) is large.

\smallskip

{\vspace{1ex}
\small  \hrule
\vspace{0.5ex}

\noindent \textbf{Algorithm 1} \textit{ConstructSequenceLists} 
\vspace{0.5ex}

\hrule
\vspace{0.5ex}
}

    {\footnotesize

\noindent \textbf{Input:} {Circular-arc polygons $\mathscr P _1$} and {$\mathscr P _2$}

 \noindent \textbf{Output:} {Sequence lists $S_1$ and $S_2$, related edge sets  $R_1$ and $R_2$}

\noindent $~~$1:$~~~~$Find the MBRs,   effective axis and   extended boundary lines

\noindent $~~$2:$~~~~$\textbf{for} each $i$ $\in \{1,2\}$ \textbf{do}

\noindent $~~$3:$~~~~$$~~~~$$R_i\leftarrow$ related edges from $\mathscr P_i$

\noindent $~~$4:$~~~~$Create two empty sequence lists $S_1$ and $S_2$

\noindent $~~$5:$~~~~$\textbf{for} each $i$ $\in \{1,2\}$ \textbf{do}

\noindent $~~$6:$~~~~$$~~~~$\textit{InitializeSequenceList} ($R_i$, $S_i$) // cf., Algorithm 2

\noindent $~~$7:$~~~~$Sort  the endpoints of  the segments (from $S_1,$ $S_2$),   and put them into 

\noindent $~~~~$$~~~~~~$the priority queue  $\mathscr Q$

\noindent $~~$8:$~~~~$Initialize the empty balanced tree $\mathscr R$

\noindent $~~$9:$~~~~$\textbf{for} each point $p$  $\in \mathscr Q$ \textbf{do}

\noindent 10:$~~~~$$~~$Let $s$ be the segment containing the point $p$, and $t$ be the  

\noindent $~~~~~~~~~~~~$segment  immediately  above or below $s$

\noindent 11:$~~~~$$~~$\textbf{if} ($p$ is the left endpoint of segment $s$) 

\noindent 12:$~~~~$$~~~~$$~~$Assign  two ``labels'' to $s$, and insert $s$ into $\mathscr R$

\noindent 13:$~~~~$$~~~~$$~~$\textbf{if} ( $s.lb_1$  $\neq t.lb_1$ )

\noindent 14:$~~~~$$~~~~$$~~~~$$~~$\textbf{if} ($s$  intersects with $t$) 

\noindent 15:$~~~~$$~~~~$$~~~~$$~~~~$$~~$Insert the intersection  into $\mathscr Q$, and also  insert into $S_1$ and $S_2$


\noindent 16:$~~~~$$~~$\textbf{else if} ($p$ is the right endpoint of  segment $s$) 

\noindent 17:$~~~~$$~~~~$$~~$\textbf{if} ($s.lb_1\neq t.lb_1$)

\noindent 18:$~~~~$$~~~~$$~~~~$$~~$\textbf{if} ($s$ intersects with $t$ and this intersection  $\notin \mathscr Q$)


\noindent 19:$~~~~$$~~~~$$~~~~$$~~~~$$~~$Insert this intersection  into $\mathscr Q$,   and   it into $S_1$ 

\noindent $~~~~~~~~~~~~~~~~~~~~~~~~~$also insert and $S_2$, respectively;  delete $s$ from $\mathscr R$   

\noindent 20:$~~~~$$~~$\textbf{else} // $p$ is an intersection  of two segments, say $s$ and $t$

\noindent 21:$~~~~$$~~~~$$~~$Swap the position of $s$ and $t$ //  assume $s$ is above $t$ 

\noindent 22:$~~~~$$~~~~$$~~$Let $t_1$ be the segment above $s$, and $t_2$ be the segment below $t$

\noindent 23:$~~~~$$~~~~$$~~$\textbf{if} ($s.lb_1\neq t_1.lb_1$ or $t.lb_1\neq t_2.lb_1$) 

\noindent 24:$~~~~$$~~~~$$~~~~$$~~$\textbf{if} ($s$ intersects with $t_1$, or $t$ intersects with $t_2$) 

\noindent 25:$~~~~$$~~~~$$~~~~$$~~~~$$~~$Insert the intersection point into $Q$, and

\noindent $~~~~~~~~~~~~~~~~~~~~~~~~~$ also insert it into $S_1$ and $S_2$, respectively 

\noindent 26:$~~~~$\textbf{return} $S_1$ and $S_2$, $R_1$ and $R_2$
\vspace{1ex}

\hrule
\vspace{1ex}
        }

 \vspace{-1ex}
\subsection{The Algorithm}\label{subsec:Step1Algorithm}

Let $R_1$ and $R_2$ be the related edges from $\mathscr P_1$ and $\mathscr P_2$, respectively. Given a segment $s$, we use $s.lb_1$ and $s.lb_2$ to denote the two labels of segment $s$. Algorithm 1  illustrates the pseudo-codes of  constructing the two sequence lists.

We first  choose the \textit{related edges} based on the extended boundary lines (Lines 1-3). Next, we construct two empty sequence lists and initialize them (Lines 4-6). After this,  we compute the intersections (Lines 7-25). In particular, when we compute the intersections, two  \textit{labels} are assigned to the segment before it is inserted into the balanced tree (Line 12), and we use the two  sequence lists to store the  intersections (Lines 15, 19 and 25). 
Note that, the pseudo-codes of initializing the two sequence lists are listed in Algorithm 2. This algorithm decomposes non-x-monotone arcs,  puts the \textit{processed related edges}   into the sequence lists in an orderly manner, and assigns appropriate values  to the tri-value switches.

\vspace{-1ex}

\begin{lemma}\label{theorem:modified plane sweep algorithm complexity}
Given two circular-arc polygons with $m$ and $n$ edges, respectively, and assume there are   $l$  \textit{related edges} between the two polygons, we have that  constructing the two sequence lists can be  finished in $O(m+n+l+(l+k)\log l)$ time, where $k$ is the number of intersections.
\end{lemma}

\begin{proof}
To obtain the \textit{related edges}, we first need to find the MBRs of two polygons, which takes linear time. We next determine the \textit{effective axis} by comparing the horizontal and vertical spans of the intersection set of two MBRs, which can be finished in constant time. Furthermore, the \textit{extended boundary lines} can be obtained in constant time once we obtain the effective axis. Based on two extended boundary lines, we finally obtain the \textit{related edges} by comparing the geometrical relationship between each edge and extended boundary lines, which also takes linear time.   Thus,  Lines 1-3 take $O(m+n)$ time.  

Creating two empty  sequence lists takes constant time. In addition, in order to initialize the two sequence lists, we need to decompose each non-x-monotone arc. Decomposing a single arc can be finished in constant time. In the worst case, all the \textit{related edges} are non-x-monotone arcs. Even so, there are no more than $3l$ items in the two sequence lists, according to Lemma \ref{lemma:at least at most}. Hence initializing two  sequence lists takes $O(l)$ time. Sorting all the endpoints of segments in the priority queue $\mathscr Q$  takes $O(l\log l)$ time, and initializing the balanced tree $\mathscr R$ takes constant time. Thus, Lines 4-8 take $O(l+l\log l)$ time.

As  there  are no more than $3l$ segments in $S_1$ and $S_2$, the number of endpoints thus is  no more than $6l$. So we can easily know that the number of executions of the  \textbf{for} loop (Line 9)  is no more than $6l+k$. 
Within the \textbf{for} loop, each operation (e.g., insert, delete, swap, find the above/below segment)  on $\mathscr R$ can be finished in $O(\log l)$ time, as the number of segments in $\mathscr R$ never exceeds $3l$.
Additionally, each of other operations (e.g., assign labels to the segment, determine if two segments intersects with each other)  can be finished in constant time. Thus, Lines 9-25 take $O((6l+k)\log l)$ time, i.e., $O((l+k)\log l)$ time. Putting it together, this completes the proof.  
\end{proof}

{\vspace{1ex}
\small  \hrule
\vspace{0.5ex}

\noindent \textbf{Algorithm 2} \textit{InitializeSequenceList} 
\vspace{0.5ex}

\hrule
\vspace{0.5ex}
}

    {\footnotesize

\noindent \textbf{Input:} { $R_i$, $S_i$ }

 \noindent \textbf{Output:} {$S_i$}

\noindent $~~$1:$~~$$temp\leftarrow 1$ // the $temp$ is used to set the tri-value switch 

\noindent $~~$2:$~~$\textbf{for} each related edge  $r\in R_i$ \textbf{do}

\noindent $~~$3:$~~~~$$~~$\textbf{if} ($r$ is a \textit{non-x-monotone circular arc} )

\noindent $~~$4:$~~~~$$~~~~$$~~$Decompose it and put the decomposed arcs into $S_i$, and 

$~~~~~~~~~~$set the value of each tri-value switch   to ``$temp$''

\noindent $~~$5:$~~~~$$~~~~$$~~$\textbf{if} ($temp$=1)

\noindent $~~$6:$~~~~$$~~~~$$~~~~$$~~$$temp\leftarrow$2;

\noindent $~~$7:$~~~~$$~~~~$$~~$\textbf{else} // $temp$=2

\noindent $~~$8:$~~~~$$~~~~$$~~~~$$~~$$temp\leftarrow$1;

\noindent $~~$9:$~~~~$$~~$\textbf{else} // $r$ is not a \textit{non-x-monotone circular arc}

\noindent 10:$~~~~$$~~~~$$~~$Put it into $S_i$, set  the value of tri-value switch  to ``0''

\hrule
\vspace{1ex}
        }

\smallskip

We have shown how to construct two sequence lists.   
It is easy to know that we cannot get the resultant polygon  based on  \textit{only} the information stored in the two sequence lists.   
Next, we are ready to merge the information in them and part of information in the original linked lists, and store the `merged information' using two \textit{new linked lists}. For ease of exposition, we call this step `building two new linked lists'.  Note that  the two new linked lists  will be significantly used in Section \ref{sec:traversing}, as we need to traverse them to get the resultant polygon.

\section{Building Two New Linked Lists}\label{sec:two new linked lists}

To  construct  two new linked lists, on the whole   we first initialize two (empty) new linked  lists, and then copy the information from the \textit{original linked lists} to the \textit{new linked lists} while we replace  those \textit{related edges}  using the information stored  in the two sequence lists.    
Note that there are two important yet easy-to-ignore issues  needed to be handled when we construct  new linked lists.
We next check these issues, and then present the algorithm of constructing new linked lists.


\vspace{-2ex}
\subsection{Eliminating The Ambiguity} 
Recall Section \ref{subsec:datastructure},  we always add an \textit{appendix point} between two vertices if an edge is a circular arc. When we replace \textit{related edges} with the information stored in sequence lists, we also have to ensure this property. It is easy to know that,  when the intersections appear on a circular arc, this arc will be decomposed by these intersections. We thus have to add the new appendix point for each sub-segment, in order to eliminate the ambiguity. 

\vspace{-1ex}

\begin{lemma}
Suppose there are  $k$  intersections on a circular arc; then, we  need to insert at least $k$  and at most  $k+1$ new \textit{appendix points}.
\label{lemma: insert new appendix point}
\end{lemma}

\begin{proof}
Since $k$ intersections can subdivide a complete circular arc into $k+1$ small circular arcs, and for each small circular arc one \textit{appendix point} is needed and  enough to eliminate the ambiguity. Clearly, $k+1$ \textit{appendix points} are needed for $k+1$ small circular arcs. Note that,  there is an \textit{appendix point} beforehand. Therefore,  when there is no any intersection (among all these intersections) that is  \textit{coinciding with} this existing  \textit{appendix point}, only $k$ new \textit{appendix points} are needed. Otherwise, we need $k+1$ new {\textit{appendix points}}.  
\end{proof}

Besides the above issue, another issue is to handle the decomposed arcs. We decomposed non-x-monotone arcs into x-monotone arcs ever, we thus need to merge them. The natural solution is to  compare each pair of adjacent edges of the resultant polygon, checking if they can be merged into a single arc. This way however is inefficient because (\romannumeral 1) most of edges of the resultant polygon may not need to be merged; and (\romannumeral 2) given two adjacent arcs, let $C_1$ and $C_2$ respectively denote their corresponding circles; checking if the two arcs can be merged into a single arc needs to compute the centres of $C_1$ and $C_2$, this will use  trigonometric functions (which could have been avoided). 
We next show how to efficiently merge them, with the help  of  the \textit{tri-value switch} (recall Section \ref{subsec:two sequence lists}).  


\vspace{-2ex}
\subsection{Efficiently Merging Decomposed Arcs}  
We merge the decomposed arcs when constructing new linked lists, rather than merge them after obtaining the resultant polygon. In particular,  we here utilize the information stored in the tri-value switch to improve the efficiency.
Specifically, given an item  $S_i[j]$, if  $S_i[j].c=1$ (or 2),  we continue to fetch its next item $S_i[j+1]$  from the  sequence list if  $S_i[j].c=S_i[j+1].c$. In this way,  a group of consecutive items   are fetched from the sequence list.
W.l.o.g, assume that we have fetched $\lambda$ consecutive items, $S_i[j]$,  $\cdots$, $S_i[j+\lambda-1]$.  Then, we do as follows.

\begin{itemize*}
\item If $S_i[j].b=S_i[j+1].b=\cdots=S_i[j+\lambda-1].b=\emptyset$, we discard the fetched items instead of merging them. This is because there is no intersection  on these decomposed arcs, the merged result should be the same as the edge in the original linked list. 
\item Otherwise, we  insert new appendix points,  merge decomposed arcs, and replace the edge in the original linked list.
\end{itemize*}

Let us revisit  Figure \ref{fig:third domain b}; recall that there are  eight  items in  $S_1$, and the values in their tri-value switches are ``0, 1, 1, 2, 2, 1, 1, 0'', respectively. Although $S_1[2].c=S_1[3].c=1$, we discard the two items  instead of merging them, as 
$S_1[2].b=S_1[3].b=\emptyset$. Similarly,  we also discard the items $S_1[4]$ and $S_1[5]$. Note that, for the 6th and 7th items, $S_1[6].c=S_1[7].c=1$ and $S_1[7].b\neq \emptyset$; thus, we   insert a new appendix point,  merge the  two decomposed arcs, and use the  merged result  to replace the edge in the original linked list. 

Note that, the consecutive items mentioned earlier are actually the decomposed arcs generated from a single non-x-monotone circular arc. According to  Lemma \ref{lemma:at least at most}, we can easily obtain the following  corollary.
\begin{corollary}\label{theorem:lambda consecutive items}
Let $\lambda$ be the number of consecutive items, we have that $\lambda\leq 3$ and $\lambda\geq 2$.
\end{corollary}

{\vspace{1ex}
\small  \hrule
\vspace{0.5ex}

\noindent \textbf{Algorithm 3} \textit{BuildNewLinkedLists} 
\vspace{0.5ex}

\hrule
\vspace{0.5ex}
}

    {\footnotesize

\noindent \textbf{Input:} { $\mathscr P_1$ and $\mathscr  P_2$, $S_1$ and $S_2$, $R_1$ and $R_2$ }

 \noindent \textbf{Output:} {$\mathscr P_1^*$ and $\mathscr P_2^*$}

 \noindent $~~$1:$~~~~$Set $\mathscr P_1^*=$$\mathscr P_2^*=\emptyset$, and $j\leftarrow 1$

 \noindent $~~$2:$~~~~$\textbf{for} each $i\in \{1,2\}$ \textbf{do}

 \noindent $~~$3:$~~~~$$~~~~$\textbf{for} each edge $e$  $\in \mathscr P_i$  \textbf{do}

 \noindent $~~$4:$~~~~$$~~~~$$~~~~$\textbf{if} ($e\notin R_i$ )

 \noindent $~~$5:$~~~~$$~~~~$$~~~~$$~~~~$Add $e$ to $\mathscr P_i^*$

 \noindent $~~$6:$~~~~$$~~~~$$~~~~$\textbf{else} // $e$ is  a \textit{related edge}


 \noindent $~~$7:$~~~~$$~~~~$$~~~~$$~~~~$\textbf{if} ($s_i[j].c= 0$) // not a decomposed arc

 \noindent $~~$8:$~~~~$$~~~~$$~~~~$$~~~~$$~~~~$\textbf{if} ($S_i [j].b=\emptyset$) // no intersection

 \noindent $~~$9:$~~~~$$~~~~$$~~~~$$~~~~$$~~~~$$~~~~$$j\leftarrow j+1$, and add $e$ to $\mathscr P_i^*$

 \noindent 10:$~~~~$$~~~~$$~~~~$$~~~~$$~~~~$\textbf{else} // $S_i [j].b\neq \emptyset$

 \noindent 11:$~~~~$$~~~~$$~~~~$$~~~~$$~~~~$$~~~~$\textbf{if} ($S_i [j]$ is a circular arc)

 \noindent 12:$~~~~$$~~~~$$~~~~$$~~~~$$~~~~$$~~~~$$~~~~$Insert new appendix points

 \noindent 13:$~~~~$$~~~~$$~~~~$$~~~~$$~~~~$$~~~~$Put the information  from $S_i [j]$ into $\mathscr P_i^*$, and set $j\leftarrow j+1$

 \noindent 14:$~~~~$$~~~~$$~~~~$$~~~~$\textbf{else}  // $s_i[j].c= 1$ (or 2)

 \noindent 15:$~~~~$$~~~~$$~~~~$$~~~~$$~~~~$Set  $tri=S_i[j].c$, and $\lambda \leftarrow 0$

 \noindent 16:$~~~~$$~~~~$$~~~~$$~~~~$$~~~~$\textbf{do} // copy the consecutive decomposed arcs

 \noindent 17:$~~~~$$~~~~$$~~~~$$~~~~$$~~~~$$~~~~$$\lambda\leftarrow \lambda+1$, $temp[\lambda]\leftarrow S_i[j]$, $j\leftarrow j+1$

 \noindent 18:$~~~~$$~~~~$$~~~~$$~~~~$$~~~~$\textbf{while} $S_i[j].c=tri$ 

 \noindent 19:$~~~~$$~~~~$$~~~~$$~~~~$$~~~~$\textbf{if} ($temp[1].b=\cdots=temp[\lambda].b=\emptyset$) 

 \noindent 20:$~~~~$$~~~~$$~~~~$$~~~~$$~~~~$$~~~~$Put  $e$ into $\mathscr P_i^*$

 \noindent 21:$~~~~$$~~~~$$~~~~$$~~~~$$~~~~$\textbf{else}

 \noindent 22:$~~~~$$~~~~$$~~~~$$~~~~$$~~~~$$~~~~$Insert new appendix points,  merge decomposed arcs, and 
 
$~~~~~~~~~~~~~~~~~~~~~~$ put the merged result into $\mathscr P_i^*$

 \noindent 23:$~~~~$\textbf{return} $\mathscr P_1^*$ and $\mathscr  P_2^*$

\hrule
\vspace{1ex}
        }

\subsection{The Algorithm}
Let $\mathscr P_1^*$ and $\mathscr P_2^*$ be the two new linked lists, respectively. Algorithm 3  depicts the pseudo-codes of constructing two new linked lists. For each edge $e$ in the \textit{original linked list},  we check  whether it is a \textit{related edge}. If so, we further check  whether $S_i[j]$ is a decomposed arc.  Lines 7-13 are used to process the case when  it is not a decomposed arc. In contrast, Lines 14-22 are used to handle the opposite case. In this case, we first fetch all the consecutive decomposed arcs (Lines 15-18), and then  check if there are intersections on these decomposed arcs. If it is not,  we   put the edge $e$ into $\mathscr P_i^*$ (Lines 19-20). Otherwise, we insert new appendix points, merge decomposed arcs, and put the merged result (instead of $e$) into $\mathscr P_i^*$ (Lines 21-22).

\begin{lemma}
Suppose we have obtained the two sequence lists $S_1$ and $S_2$; then,  constructing the two new  linked lists $\mathscr P_1^*$ and $\mathscr P_2^*$  takes $O(m+n+l+k)$ time.
\label{theorem: consturct modified polygon}
\end{lemma}

\begin{proof}
Inserting a single \textit{appendix point} takes constant time. In the worst case, all the intersections are located on arcs rather than on line segments. Even so,  there are  no more than $2k$ new \textit{appendix points} according to Lemma \ref{lemma: insert new appendix point}. Hence inserting \textit{appendix points}  takes  $O(k)$ time.   Merging  $\lambda$ consecutive decomposed arcs takes constant time, as $\lambda\leq 3$ (cf.,  Corollary    \ref{theorem:lambda consecutive items}). In the worst case, all the related edges are non-x-monotone arcs, hence merging all  consecutive decomposed arcs takes $O(l)$ time.  

Since the number of edges in $\mathscr P_1$ and $\mathscr P_2$ is $m+n$,   the number of executions  of the second \textbf{for} loop (Line 3) is $m+n$. Specifically,  the number of executions of Line 4 is $m+n-l$, and the one of Line 6 is $l$.   Even if all  related edges are  non-x-monotone arcs,  the number of executions of Line 17 is no more than $3l$. Furthermore, within the \textbf{for} loop, each  operation takes constant time (note: here we  no longer  consider the time for inserting new appendix points and merging decomposed arcs, as we have analysed them  in the previous paragraph). Therefore, Lines 4-5 and Lines 7-22 take $O(m+n-l)$ and $O(3l)$ time, respectively.  
Putting it all together, this completes the proof. 
\end{proof}

\section{Traversing}\label{sec:traversing}

 \vspace{-1ex} 
 
In the previous section, we have obtained $\mathscr P_1^*$ and $\mathscr P_2^*$. This section shows in detail how to get the resultant polygon by traversing them. 
In order to correctly traverse  $\mathscr P_1^*$ and $\mathscr P_2^*$, we need to assign  the \textit{entry}-\textit{exit} properties to  intersections.

\vspace{-2ex}

\subsection{Entry-Exit Property}\label{subsec:entryExit}
The entry-exit property is an important symbol that was ever used in many papers focusing on boolean operation of traditional polygons (see e.g., \cite{GuntherGreiner:efficient,YongKuiLiu:AnAlgorithm}). The followings show this technique can be  used to the case of our concern as well. 
Specifically, we  assign the intersections with the \textit{entry} or \textit{exit} property \textit{alternately}.    
Note that the \textit{entry-exit} property for the first intersection in $\mathscr P^*_1$ ($\mathscr P^*_2$) is determined as follows. W.l.o.g, assume the first intersection is $i$ ($i^\prime$) in $\mathscr P^*_1$ ($\mathscr P^*_2$), and the previous node of $i$ ($i^\prime$) is $i.prev$ ($i^\prime.prev$).   We check if  $i.pre$   ($i^\prime.pre$) is outside the input polygon $\mathscr P_2$ ($\mathscr P_1$). If so, we assign the \textit{entry} (\textit{exit}) property to $i$ ($i^\prime$).

Once the \textit{entry-exit} properties are assigned to intersections,  we then obtain the resultant polygon based on the traversing rules  below.

\vspace{-2ex}
\subsection{Traversing Rules}\label{subsec:traversing rule}
Let $i_s$ be an intersection (point) of $\mathscr P_1^*$ such that $i_s$ has the \textit{entry} property. Let $v_s$ be a vertex of $\mathscr P_1^*$ such that $v_s$ does not locate in $\mathscr P_2^*$. 
There are three typical boolean operations: intersection, union and difference. Note that in the rest of discussion, the default traversing direction is counter-clockwise, unless stated otherwise.

\textit{Intersection.} 
We start to  traverse  $\mathscr P_1^*$ using  $i_s$ as the starting point. Once we meet an intersection  with the {\textit{exit}} property, we shift to  $\mathscr P_2^*$, and  traverse it. Similarly, if we meet an intersection  with the {\textit{entry}} property in $\mathscr P_2^*$, we shift back to $\mathscr P_1^*$.  In this way, a circuit will be produced. After this, we check if there is another intersection  of $\mathscr P_1^*$ such that (\romannumeral 1) it has  the {\textit{entry}} property; and  (\romannumeral 2)   it  is not a vertex of the  produced circuit.
If no such an intersection,  we terminate the traversal, and  this circuit is the intersection between $\mathscr P_1$ and $\mathscr P_2$.
Otherwise,    we let this intersection  as a new starting point, and  traverse  the two new linked lists (using the same method discussed just now), until no such an intersection  exists. In the end, we get multiple circuits, which are the intersection between  $\mathscr P_1$ and $\mathscr P_2$.

\textit{Union.} 
For this case, we, however, traverse  $\mathscr P_1^*$ using  $v_s$ as the starting point.
Once we meet an intersection  with the \textit{entry} property,  we shift to $\mathscr P_2^*$, and  traverse  it. Similarly, if we meet an intersection  with the \textit{exit} property in $\mathscr P_2^*$, we shift back to $\mathscr P_1^*$. In this way, a circuit will be produced, which is  the union between  $\mathscr P_1$ and $\mathscr P_2$.

\textit{Difference.} 
The first several steps are the same as the ones in  the \textit{union} operation, \textit{but} we traverse $\mathscr P_2^*$   in   \underline{clockwise}  direction. Similarly, if we meet an intersection  with the \textit{exit} property in $\mathscr P_2^*$, we shift back to $\mathscr P_1^*$. In this way, a circuit will be produced. Furthermore, we check if there is another vertex of $\mathscr P_1^*$ such that (\romannumeral 1) it is not a vertex of any  produced circuit; and (\romannumeral 2) it does not locate in $\mathscr P_2^*$. 
If no such a vertex,  we terminate the traversal, and this circuit is the difference between $\mathscr P_1$ and $\mathscr P_2$. 
Otherwise,  we let the vertex as a new starting point, and  traverse  the two new linked lists (using the same method discussed just now), until no such a vertex exists. In the end, we get multiple circuits, which are the difference between $\mathscr P_1$ and $\mathscr P_2$.  

\vspace{-1.5ex}

\subsection{The Algorithm}
\label{subsec:traversingAlgorithm}
In some cases the  result   consists of multiple circuits,  we denote by  $l_j$  the  linked list used to store the $j$th circuit. Let $n_d$ be a node of  $\mathscr P^*_i$ where $i\in \{1,2\}$, we  denote by $c_n$ the current node when we traverse $\mathscr P^*_i$, and denote by $c_n.next$ the next node of $c_n$. Furthermore, we denote by $i_s^\prime$ the  intersection  of $\mathscr P_1^*$ such that $i_s^\prime$ has the entry property and it is not a vertex of any produced circuit, and we use $\exists (i_s^\prime)=true$ to denote that  there exists such a point. 
Algorithm 4  illustrates the pseudo-codes of obtaining  the intersection result (remark: the ones of other two operations   can be  constructed similarly by traversing rules,   omitted).

\begin{lemma}
Given the two new  linked lists $\mathscr P_1^*$ and $\mathscr P_2^*$, to obtain the resultant polygon  takes $O(k+m+n+l)$ time.
\label{thorem: obtain resultant polygon}
\end{lemma}

\begin{proof}
Assigning the \textit{entry-exit} property to each intersection  takes constant time, and there are $k$ intersections  on each new  linked list. Thus, assigning  \textit{entry-exit} properties to  intersections   takes  $O(k)$ time.

$\mathscr P_1^*$ and $\mathscr P_2^*$ are used to generate the resultant polygon, they store the vertices, \textit{appendix} points, and intersections. The number of vertices is $2(m+n)$. In the worst case, all edges of two input polygons are circular arc segments, implying that the number of \textit{appendix} points in the input polygons is $m+n$; in this case, all the $k$ intersections are located on arcs, we  need to insert new \textit{appendix} points, and the number of new \textit{appendix} points  is no more than $k+1$, according to Lemma \ref{lemma: insert new appendix point}. So the number of all \textit{appendix} points in $\mathscr P_1^*$ and $\mathscr P_2^*$ is no more than $m+n+k+1$. Therefore, the total number of nodes in $\mathscr P_1^*$ and $\mathscr P_2^*$ is no more than $3(m+n)+2k+1$. 
Further, each operation on a node (e.g., determine the type of a node, insert a node into the resultant polygon)  takes constant time. Therefore, the traversal  takes $O(3(m+n)+2k+1)$ time. 
Putting it all together,  we have that   obtaining the resultant polygon takes $O(m+n+k+l)$ time when $\mathscr P_1^*$ and $\mathscr P_2^*$ are given beforehand{\small \footnote{It is simple to determine  $i_s^\prime$ (cf., Lines 8 and 22). We just need to collect all the intersections with entry properties in a data structure when we assign  entry-exit properties to intersections, and then remove the intersections from this data structure once they have already been visited in the process of traversing. After a circuit is formed, we  check if this data structure is empty. If otherwise, any intersection stored in this data structure can be taken as $i_s^\prime$.}}.
\end{proof}

{\vspace{1ex}
\small  \hrule
\vspace{0.5ex}

\noindent \textbf{Algorithm 4} \textit{TraverseLinkedLists} 
\vspace{0.5ex}

\hrule
\vspace{0.5ex}
}

    {\footnotesize

\noindent \textbf{Input:} { $\mathscr P_1^*$ and $\mathscr  P_2^*$ }

 \noindent \textbf{Output:} {$\mathscr P_3$}

 \noindent $~~$1:$~~~~$Set $j=1$

 \noindent $~~$2:$~~~~$\textbf{for} each  $i\in \{1,2\}$ \textbf{do}
 
\noindent $~~$3:$~~~~$$~~~~$Assign entry-exit property to $\mathscr P_i^*$

\noindent $~~$4:$~~~~$\textbf{do}

\noindent $~~$5:$~~~~$$~~~~$\textbf{if} ($j$=1)

\noindent $~~$6:$~~~~$$~~~~$$~~~~$Choose a starting point $i_s$ from $\mathscr P^*_1$

\noindent $~~$7:$~~~~$$~~~~$\textbf{else} 

\noindent $~~$8:$~~~~$$~~~~$$~~~~$Let $i_s\leftarrow i_s^\prime$ // i.e., let  $i_s^\prime$ be a new starting point

\noindent $~~$9:$~~~~$$~~~~$Set $c_n\leftarrow i_s$, and $l_j=\emptyset$ 

\noindent 10:$~~~~$$~~~~$\textbf{do}

\noindent 11:$~~~~$$~~~~$$~~~~$Put $c_n$ into $l_j$

\noindent 12:$~~~~$$~~~~$$~~~~$\textbf{if} ($c_n.next$ is not an intersection point)

\noindent 13:$~~~~$$~~~~$$~~~~$$~~~~$Let $c_n\leftarrow c_n.next$, and put $c_n$ into $l_j$

\noindent 14:$~~~~$$~~~~$$~~~~$\textbf{else}

\noindent 15:$~~~~$$~~~~$$~~~~$$~~~~$Shift to $\mathscr P_2^*$, choose  the node $n_d$ such 
that  $n_d=c_n.next$, let

$~~~~~~~~~~~~~~~$ $c_n\leftarrow n_d$, and put  $c_n$ into $l_j$

\noindent 16:$~~~~$$~~~~$$~~~~$$~~~~$\textbf{if} ($c_n.next$ is not an intersection point)

\noindent 17:$~~~~$$~~~~$$~~~~$$~~~~$$~~~~$$c_n\leftarrow c_n.next$, and put $c_n$ into $l_j$

\noindent 18:$~~~~$$~~~~$$~~~~$$~~~~$\textbf{else} 

\noindent 19:$~~~~$$~~~~$$~~~~$$~~~~$$~~~~$Shift to $\mathscr P_1^*$, choose  the node $n_d$ such 
that  $n_d=c_n.next$, and

$~~~~~~~~~~~~~~~~~~$ let $c_n\leftarrow n_d$

\noindent 20:$~~~~$$~~~~$\textbf{while} ($c_n\neq$ $i_s$)

\noindent 21:$~~~$$~~~~$Let $\mathscr P_3\leftarrow \mathscr P_3\cup l_j$, and set $j\leftarrow j+1$ 

\noindent 22:$~~~~$\textbf{while} ($\exists (i_s^\prime)=true$)

\noindent 23:$~~~~$\textbf{return} $\mathscr P_3$

\hrule
\vspace{1ex}
        }

\smallskip

\smallskip

Up to now, we have addressed all the main steps of our algorithm --- R{\small E2L}. We next analyse its  complexity. 

\vspace{-1ex}
\section{Time/Space Complexity}\label{sec:complexity}
We  analyse the  complexity of our algorithm \textit{using}  the intersection operation as a sample (note:   the complexity of other two operations is  the same as the one of this operation, and can be derived similarly, omitted). 

\begin{theorem}\label{theorem:time space complexity}
Given two circular-arc polygons with $m$ and $n$ edges, respectively, and assume there are   $l$  \textit{related edges} between the two circular-arc polygons. Then, to achieve  boolean operation on them takes $O(m+n+(l+k)\log l)$ time, using $O(m+n+k)$ space, where $k$ is the number of intersections. 
\end{theorem}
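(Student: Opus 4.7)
My plan is to prove this theorem by direct composition of the three prior results already established for each stage of R\footnotesize{E2L}, together with a simple accounting of the space used by each of the data structures introduced in Sections \ref{sec:our solution}--\ref{sec:traversing}.

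First I would handle the time bound. The algorithm consists of exactly three calls: \textit{ConstructSequenceLists}, \textit{ConstructNewLinkedLists}, and \textit{TraverseLinkedLists}. By Theorem \ref{theorem:modified plane sweep algorithm complexity} the first call runs in $O(m+n+l+(l+k)\log l)$ time; by Theorem \ref{theorem: consturct modified polygon} the second call runs in $O(m+n+l+k)$ time; and by Theorem \ref{thorem: obtain resultant polygon} the third call runs in $O(m+n+l+k)$ time. Summing these three bounds and absorbing the lower-order $O(l)$ and $O(k)$ terms into $(l+k)\log l$ yields the overall time bound $O(m+n+(l+k)\log l)$, as claimed.

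Next I would bound the space. The input polygons $\mathscr{P}_1,\mathscr{P}_2$, stored in the original A\scriptsize{PDLL} linked lists, require $O(m+n)$ space. The two sequence lists $S_1,S_2$ contain at most $3l$ processed related edges by Lemma \ref{lemma:processed related edges}, together with the $k$ intersections distributed across their second domains, so they occupy $O(l+k)$ space. The auxiliary structures used inside the modified plane-sweep (the heap $\mathscr{Q}$ and the red-black tree $\mathscr{R}$) simultaneously store at most $O(l+k)$ items. The two new linked lists $\mathscr{P}_1^*,\mathscr{P}_2^*$ hold the $2(m+n)$ original vertexes, the at most $m+n$ original appendix points, the $k$ intersections, plus at most $k+1$ new appendix points (by Theorem \ref{lemma: insert new appendix point}), totalling $O(m+n+k)$. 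Finally, the resultant polygon $\mathscr{P}_3$ has size $O(m+n+k)$. Adding these contributions gives $O(m+n+l+k)$ space.

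The proof is essentially a bookkeeping exercise, since all of the substantive combinatorial and algorithmic arguments have already been dispatched in the earlier theorems; the only point that requires a little care is that the $O(l)$ and $O(l+k)$ additive terms from the first step are indeed dominated by $(l+k)\log l$ for $l\ge 2$, which I would note briefly to justify the clean final form. I do not anticipate a serious obstacle here, so the ``main obstacle'' is merely ensuring that no component is double-counted and that the asymptotic simplification is valid in the degenerate regimes where $l$ or $k$ is very small (which is handled by the $O(m+n)$ term absorbing any constant contributions).
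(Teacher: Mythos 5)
Your proposal is correct and follows essentially the same route as the paper's own proof: the time bound is obtained by summing the bounds of the three prior theorems for \textit{ConstructSequenceLists}, \textit{ConstructNewLinkedLists}, and \textit{TraverseLinkedLists}, and the space bound by itemizing the sizes of the sequence lists, the sweep structures, and the new linked lists. The only cosmetic differences are that the paper excludes the input/output lists $\mathscr P_1,\mathscr P_2,\mathscr P_3$ from the space accounting by convention (which does not change the final bound) and does not spell out the absorption of the lower-order $O(l+k)$ terms, a point you handle slightly more carefully.
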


\begin{proof}
Our algorithm consists three main steps, and they take time $O(m+n+l+(l+k)\log l)$, $O(m+n+k+l)$,  and $O(m+n+k+l)$, respectively (see Lemmas \ref{theorem:modified plane sweep algorithm complexity},  \ref{theorem: consturct modified polygon} and  \ref{thorem: obtain resultant polygon}).   
Putting these results  together, hence the  time complexity is  $O(m+n+(l+k)\log l)$.

The space used in our algorithm mainly consists of  two groups of  related edges $R_1$ and $R_2$,  two sequence lists $S_1$ and $S_2$,  the priority queue $\mathscr Q$,   two new linked lists $\mathscr P_1^*$ and $\mathscr P_2^*$, and the balanced tree $\mathscr R$. (Remark: the input polygons $\mathscr P_1$, $\mathscr P_2$, and the output polygon  $\mathscr P_3$ are the input and output data; by the convention{\small \footnote{As an example,  the \textit{bubble sort algorithm} takes $O(1)$ space for sorting arbitrary $n$ natural numbers.}}, we  do not need to  consider them when we analyse the space complexity.)

Specifically, $R_1$ and $R_2$ have the size of  $O(l)$, as they are used to store the \textit{related edges}.  $S_1$ and $S_2$ are used to store the \textit{processed related edges}. In the worst case, the number of \textit{processed related edges} is no more than $3l$, according to Corollary \ref{lemma:processed related edges}. So $S_1$ and $S_2$ have the size of $O(3l)$.  Recall Algorithm 1, the priority queue $\mathscr Q$ is used to store the  endpoints of \textit{processed related edges} and the intersections, and so $\mathscr Q$ has the size of $O(6l+k)$. Regarding to the balanced tree $\mathscr R$, it has the size of $O(l)$ at most, as it is used to store the segments \textit{currently} intersecting the sweep line.   Furthermore,  $\mathscr P_1^*$ and $\mathscr P_2^*$ have the size of $O( 3(m+n)+2k+1)$, see the proof of Lemma  \ref{thorem: obtain resultant polygon}.  Putting it all together, we have that the  space complexity of our algorithm is $O(m+n+k+l)$. In addition,  the upper bound of the parameter $l$ is $m+n$. This completes the proof. 
\end{proof}

We can see that our algorithm roughly consumes linear space when $k$ is small.  The running time also approximates to linear complexity when $l$ and $k$ are small. It is noteworthy that a straightforward adaptation from the plane sweep algorithm (see e.g., the `Standard' method described in Section \ref{subsec:meths}) requires  $O( (m+n+k)\log (m+n)  )$ time. In other words, even if $k$ is small (e.g., $k<<m+n$), it is also \textit{linearithmic time}. We remark that, for boolean operations on circular-arc polygons,  the $O( (m+n+k)\log (m+n)  )$ result is actually the state-of-the-art competitor in terms of computational complexity.

Although our algorithm has some advantages to some extent, we have to point out that in the worst case (note: this case is possible although it is not the usual case), i.e., $l=m+n$, the running time deteriorates to $O( (m+n+k)\log (m+n)  )$, which is equal to the one of the standard method. In this case, the advantages of the proposed algorithm disappear, viewed from the theoretical perspective.   To this step, an interesting issue is: when $l=m+n$, whether or not its practical efficiency is also the same as the one of the standard method? We will experimentally evaluate our algorithm as well as the competitors in Section \ref{sec:experiment}, after we introduce some immediate extensions.

\section{Extensions}\label{sec:extension}

While this paper focuses on boolean operation of circular-arc polygons, our techniques can be easily extended to compute boolean operation of  traditional polygons. Assume  there are two traditional polygons, for example, we can also use two  lists to represent them. In this case, the  {\small \textsf{Tag}} domain is unneeded as the traditional polygons do not need the appendix points. We can also choose  \textit{related edges} based on the extended boundary lines, and store them using \textit{two sequence lists}. Note that  the third domain of the sequence list is unneeded, as here all related edges  are straight line segments.  When computing the intersections, we can also use  \textit{two labels} to speed up the process of inserting the intersections into their corresponding edges, and to avoid \textit{false} intersections. Next, we also construct \textit{two new linked lists}, by using the information stored in two sequence lists to replace the related edges in the original linked lists. Particularly, we here do not need to insert new appendix points and merge the decomposed arcs, as the traditional polygons have no such information. We finally get the resultant polygon by \textit{traversing} the two new linked lists, it is the same as that in Section \ref{sec:traversing}.

Furthermore, the discussions presented in previous sections assumed the circular-arc polygons to be operated have no holes. If we want to handle the opposite case, this can  be easily achieved by a straightforward adaptation of our proposed method.  Assume  we want to compute the  intersection of two circular-arc polygons with holes, for example,  we can  use multiple  lists to represent the circular-arc polygon with holes. One is to represent the outer boundaries of the circular-arc polygon, others are respectively to represent the boundaries of each hole. We can 
first compute the intersection of  the outer boundaries of two polygons, and then use this intersection result to subtract each hole of the two polygons.  All of these steps are quite straightforward, when our proposed method is given beforehand.

Finally  it is also immediate  to compute boolean operation on  circular-arc polygons with \textit{self-intersection}. Assume we want to compute the intersection, for example, we just need to make a minor modification on the traversing rule presented in Section \ref{subsec:traversing rule}. We also start to traverse $\mathscr P_1^*$ using an intersection with the \textit{entry} property as the starting point, and shift to $\mathscr P_2^*$ when  the number of intersections we meet in $\mathscr P_1^*$  is even. The main difference is that  the \textit{traversing direction} here is not always constant. 
To determine whether or not the traversing direction is needed to change when we shift from $\mathscr P_1^*$ ($\mathscr P_2^*$) to $\mathscr P_2^*$ ($\mathscr P_1^*$),  a key step is to check if the \textit{entry-exit} property of this intersection in $\mathscr P_1^*$ is different from the one in $\mathscr P_2^*$.  If so, we need to change the traversing direction. Otherwise, we needn't.

\section{Performance Evaluation}\label{sec:experiment}
This section  evaluates  our algorithm  experimentally. Specifically, Section   \ref{subsec:meths} describes the baselines. The experimental settings and datasets are introduced in Section \ref{subsec:settingAndDataset}, and the   results are reported in Section \ref{subsec:comparative result}. We also investigate  our proposed algorithm based on a specific application (Section \ref{subsec:caseStudy}).

\subsection{Methodologies }\label{subsec:meths}

We compare our method (i.e., R{\footnotesize E2L}) with several baselines,  which are either the algorithms used to handle more general case of polygons, or the simpler versions of our proposed method.  We  shortly introduce  them as follows.

\begin{table}[t]
\begin{center}
\vspace{0ex}
\begin{tabular}{p{.05\textwidth}| p{.4\textwidth} } \hline 
{{\scriptsize  Polygon}}&{{\scriptsize $~~~~~~~~~~~~~~~~~~~~~~~~~~$ Coordinates}} \\\hline
{\scriptsize $\mathscr P_1$}  & {\scriptsize (10,10), (40,10), (40,30), (\underline{32.5,40}), (20,40),(15,30), (\underline{25,22.5}),(15,15)}	 \\

{\scriptsize $\mathscr P_2$}  &{\scriptsize  (20,20), (\underline{32.5,25}) (45,20), , (55,30), (\underline{47,38.625}), (50,50), (30,45)}	\\ 

\hline
\end{tabular}
\end{center}
\vspace{-2ex}
\caption{\small  The coordinates of vertices are listed  counter-clockwise, and the left-bottom vertex is listed at first. The values tagged with  ``$\_$'' denote the coordinates of \textit{appendix points}.  }\label{tab:vertex coordinators}
\end{table}


\textit{CGAL.} We directly use the implementation of C{\footnotesize GAL}.$~$The essence of this method   is to directly$~$invoke the algorithm of boolean operation on  \textit{general polygons},$~$defined$~$as   \texttt{GeneralPol} \texttt{ygon\_2} in C{\footnotesize GAL}.  Specifically, the  ``CGAL::Cartesian$<$Number$\_$ type$>$''$~$is$~$used$~$as$~$the$~$kernel, in$~$which$~$``Number$\_$type''$~$denotes the$~$exact$~$rational$~$number-type$~$by default (see the header file ``arr$\_$rational$\_$nt.h'' in  C{\footnotesize GAL} for reference).   Based on this kernel, we construct ``CGAL::Gps$\_$circle $\_$segment$\_$traits$\_$2'' trait class, and the following objects ``Curve$\_$2,  X$\_$monotone$\_$curve$\_$2, Ge neral$\_$ polygon$\_$2, Point$\_$2''  are used, which  inherit the  trait class above.  




 \begin{table*}[t]
 \begin{center}
 \vspace{0ex}
 \begin{tabular}{p{.05\textwidth}| p{.93\textwidth} } \hline 
 {{\scriptsize  Polygon}}&{{\scriptsize $~~~~~~~~~~~~~~~~~~~~~~~~~~$ Coordinates}} \\\hline
 {\scriptsize $\mathscr P_1$}  & {\scriptsize  (15,5),(88.5,16.5),(\underline{80.95,25.9}),(69,27.5),(60,27.5),(60,35),(70,35),(\underline{72.35,43.05}),(68.5,50.5),(4.5,40.5),(\underline{13.65,33.65}),(25,35),(20,30), (28.5,27.5), (\underline{22.4,26.75}), (18.5,22)}	 \\
 
 {\scriptsize $\mathscr P_2$}  &{\scriptsize   
 (33.5,21),(\underline{39.5,17.15}),
 (46.5,18.5),(46.5,25.5),(56,12.5),(70,22),(64,22),(\underline{67.9375,31.9643}),
 (66,42.5),(36,37),(21.5,42.5),(\underline{17.4643,29.32145}),
 (21,16) }	\\ 
 
 \hline
 \end{tabular}
 \end{center}
 \vspace{-2ex}
 \caption{\small  The coordinates of vertices are listed  counter-clockwise, and the left-bottom vertex is listed at first.  }\label{tab:vertex coordinators of four test}
 \end{table*}

\textit{Berberich.} We directly use the algorithm in \cite{EricBerberich:aComputational}, which is initially developed for computing  boolean operations of conic polygons.  This method employs the D{\footnotesize CEL}  structure to represent the polygons. It first decomposes   non-x-monotone conic curves  and then  computes the arrangement of segments using the plane sweep method; next, it uses the results of arrangement to compute the \textit{overlap} of two polygons in order to achieve boolean operation. This algorithm is similar  
to that of C{\footnotesize GAL}, recall Section \ref{sec:related work}. (Remark: more information about the D{\footnotesize CEL} structure and computing the overlap of two polygons can be found in  \cite{MarkdeBerg:computational}.)   

\textit{Na\"{i}ve.}  It is one of simpler versions of our proposed method. This  method  employs  our proposed data structure, it however computes the intersections by comparing each pair of edges. Specifically, for each edge $e$ of $\mathscr P_1$, it checks whether $e$ intersects with the edges of $\mathscr P_2$. If so, it computes the intersections and inserts them into corresponding edges. It does in this way, until all edges of $\mathscr P_1$ are processed. The rest of steps are to assign  the \textit{entry-exit} properties  and to traverse, which are the same as the ones of our proposed method. Note that Greiner-Hormann's algorithm \cite{GuntherGreiner:efficient}, that was initially developed for boolean operations of traditional polygons, also computes the intersections by comparing each pair of edges, and the rest of steps also include traversing. In this regard, the Na\"{i}ve method  can be also looked as a generalization   of Greiner-Hormann's method.

\textit{Standard.} It is also a simpler version of our proposed method,  but it  is different from the previous one. Specifically, it employs the standard plane sweep algorithm to compute the intersections instead of checking each pair of edges. Note that it does not adopt the proposed optimization strategies (e.g., `using related edges', `avoiding false intersections',  `speeding up the lookups', `speeding up the merging of decomposed arcs'), others  are the similar as the R{\small E2L} method. (Remark: the idea of the Standard method is roughly same to  that of \textit{C{\footnotesize GAL}} and \textit{Berberich}, but it simply gets rid of the generality and employs a targeted data structure.) 

\begin{figure}[t]
  \centering
  \subfigure[\scriptsize ]{\label{fig:usecase1}
      \includegraphics[scale=.45]{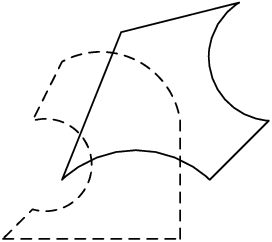}} 
   \subfigure[\scriptsize ]{\label{fig:figtest3}
       \includegraphics[scale=.4]{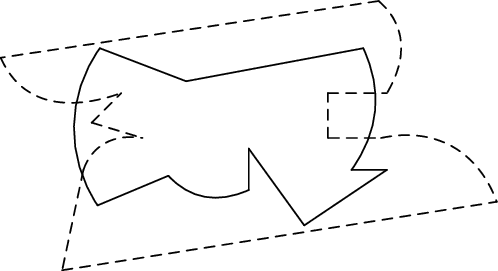}}   
     \subfigure[\scriptsize ]{\label{fig:usecase2}
       \includegraphics[scale=.4]{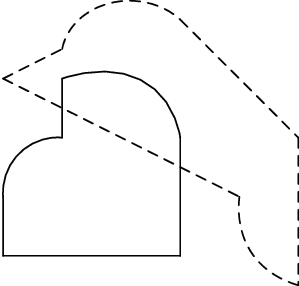}}
            \vspace{-2ex} 
\caption{\small  The use cases for our experiments.  The  polygon with dashed lines is $\mathscr P_1$, and  another one is $\mathscr P_2$. (a) For the first set of experiments. (b) For the fourth set of experiments. (c) A sampled example for which all algorithms work well, even if we use floating point data type.} 
 \label{fig:exp:8new}
\end{figure}



\subsection{Experimental Settings $\&$ Datasets}\label{subsec:settingAndDataset}
\subsubsection{Settings}
All the algorithms  are implemented in C++ language, the versions of L{\scriptsize EDA}, C{\scriptsize GAL} and B{\scriptsize OOST} library are  6.2,  4.3, and 1.46.1, respectively.   
The proposed algorithm and its simpler versions   do not employ  other libraries except the \textit{standard template library} (STL). The priority queue and the balanced tree mentioned in previous sections are implemented using  a heap and a red-black tree, respectively. The experiments are   conducted on a computer with 2.16GHz dual core CPU and 1.86GB of memory. By convention, we use  the execution time to measure the efficiency.  In our experiments,  we run 100 times by default for each algorithm and then compute the  average running time. 

\subsubsection{Datasets}\label{subsec:datasets}
\textit{Experiment 1.} We  manually produce two  circular-arc polygons, each of them is less than 10 edges,  for simplicity and for ease of reproducing the findings.  The vertex  coordinates  of the two polygons (cf., Figure \ref{fig:usecase1}) are listed in Table \ref{tab:vertex coordinators}. 
  
\textit{Experiment 2.} To  study the overall performance of these algorithms, we adopt thousands of circular-arc polygons.  Specifically, given an integer $n$,  a pair of circular-arc polygons with $n$ edges are generated at random{\small \footnote{Generally speaking, we first randomly generate two  rectangles such that they are overlapping each other. Then,  we  randomly generate $n$ points   in \textit{each} rectangle  one by one such that they satisfy two constraints: (\romannumeral 1) the segment joining  the $j$th point and the $(j-1)$th one cannot intersect with any other  segment except the segment joining the $(j-1)$th point and the $(j-2)$th one, where $j\geq 4$; and (\romannumeral 2) the segment joining the $1$st point and the $n$th one cannot intersect with any other segments except its two adjacent segments. These $n$ points will be used as the vertices of the circular-arc polygon. Finally, we import circular-arc segments by inserting a set of \textit{appendix points}.}}, and then each algorithm is executed alternately, using the pair of polygons as the input.  This is done one thousand times. In each trial,  the running time of each algorithm is  recorded, and accumulated to previous trials.  We  compute the average value for estimating the average running time of each algorithm. Furthermore, we vary the value of $n$, and obtain the running time of each algorithm using the same method mentioned above.


\textit{Experiment 3.} We use the parameter ``\texttt{double}'' to replace   the parameter    ``Number$\_$type'' in the kernel ``CGAL::Cartesian $<$Number$\_$type$>$'' (the 1st approach), and also  the parameter ``Rational'' in the kernel ``CGAL::Cartesian $<$Rational$>$'' (the 2nd approach), in order to investigate the performance  when all these algorithms use the floating point number-type. We randomly generate pairwise  circular-arc polygons as the test data. 

\textit{Experiment 4.} To answer the interesting issue mentioned in Section \ref{sec:complexity},  we use the  polygons satisfying the condition $l=m+n$ as the input, see Figure \ref{fig:figtest3} (it is a sample). The vertex coordinates of these two polygons are listed in Table \ref{tab:vertex coordinators of four test}. 

\begin{table}[t]
\begin{center}
\vspace{0ex}
\begin{tabular}{p{.08\textwidth} |p{.05\textwidth}  p{.05\textwidth} p{.05\textwidth} p{.05\textwidth} p{.05\textwidth} p{.05\textwidth} } \hline 
{{\scriptsize  Method}}&{{\scriptsize CGAL }}&{{\scriptsize Berberich}} & {{\scriptsize Naive }}& {{\scriptsize Standard }}&  {{\scriptsize RE2L }} \\\hline
{\scriptsize Time (sec.)} 		&{\scriptsize 0.0273 } & {\scriptsize 0.0287}  &{\scriptsize 0.00239} &{\scriptsize 0.00175} &{\scriptsize 0.00112} \\
{\scriptsize Impr. fac.}  		&{\scriptsize 24.375 } & {\scriptsize 25.625} &{\scriptsize 2.13} &{\scriptsize 1.5625} &{\scriptsize ---} \\
\hline
\end{tabular}
\end{center}
\vspace{-2ex}
\caption{\small  The average running time in the first set of experiments.   }\label{tab:result of first set of experiment}
\end{table}

\textit{Experiment 5.} Inspired by the curiosity,  we also investigate the scalability of our proposed algorithm  using polygons with a larger number of edges.  Note that in this set of experiments almost all the polygons  generated are self-intersection polygons when the number of edges  is equal to or larger than $100$. This is because  it is pretty difficult to generate polygons without self-intersections when the number of edges is large,  using the method mentioned in Footnote 7. Specifically, in this case  we do not employ the constraint (\romannumeral 2), see Footnote 7.

\subsection{Experimental Results}\label{subsec:comparative result}

\subsubsection{Results of The First Experiment}
\label{subsubsec:lessEdge}

Table \ref{tab:result of first set of experiment} lists the  results of Experiment 1.  Specifically, the methods are listed in Row one,  the average running time of each method is  listed in Row two, and the improvement factors{\small \footnote{Here the improvement factor refers to the ratio of time. Assume that the exectution time of the `Standard' method is 0.2 seconds, the one of the proposed method is 0.05 seconds, for example, the improvement factor is $\frac{0.2}{0.05}=4$.}} of our algorithm over the baseline methods are shown in Row 3.  
From this table, we can   see  that the proposed method outperforms its  simpler versions, demonstrating the effectiveness of our proposed strategies. Interestingly, the simpler versions of the proposed method yet  outperforms  the former two methods, let alone the proposed method. To some extent this verifies our previous claim ---  directly executing existing algorithms used to compute boolean operation of conic and/or general polygons is usually not efficient enough. 

Compared to the former two methods, although the latter three ones  adopt a different data structure, but we note that  the `Standard' method   also directly employs the plane sweep method, similar to the former two ones. Viewed from the theoretical perspective, the former two methods  should  have the  performance similar to the ``Standard'' method. To further verify this phenomenon and explain it, we hence conduct another set of experiments, evaluating the overall performance based on larger datasets.

\subsubsection{Larger Datasets}

\begin{table}[t]
\begin{center}
\vspace{0ex}
\begin{tabular}{p{.03\textwidth} p{.05\textwidth}  p{.05\textwidth} p{.05\textwidth} p{.05\textwidth} p{.07\textwidth} p{.07\textwidth} } \hline 
{{\scriptsize  $n$}}&{{\scriptsize CGAL }}&{{\scriptsize Berberich}} & {{\scriptsize Naive }}& {{\scriptsize Standard }}&  {{\scriptsize RE2L }} \\\hline
{\scriptsize 5} 		&{\scriptsize 0.0281 } & {\scriptsize 0.0293}  &{\scriptsize 0.00234} &{\scriptsize 0.0018} &{\scriptsize 0.00107} \\

{\scriptsize 10} 	&{\scriptsize 0.0609} &{\scriptsize 0.0772}	 &{\scriptsize 0.0062}&{\scriptsize 0.0041} &{\scriptsize 0.00256}\\ 

{\scriptsize 20} 	&{\scriptsize 0.1282}  &{\scriptsize 0.1297}	 &{\scriptsize 0.0125} &{\scriptsize 0.0072} &{\scriptsize 0.00369} \\

{\scriptsize 30} &{\scriptsize 0.1656}  &{\scriptsize 0.1741}		 &{\scriptsize 0.0328} &{\scriptsize 0.0176} &{\scriptsize 0.00614} \\ 

{\scriptsize 40}  	&{\scriptsize 0.5172}&{\scriptsize 0.5672}  &{\scriptsize 0.0391} &{\scriptsize 0.0182} &{\scriptsize 0.00683} \\ 

{\scriptsize 50} 	&{\scriptsize 0.61}  &{\scriptsize 0.681} &{\scriptsize 0.0594} &{\scriptsize 0.0213} &{\scriptsize 0.00851} \\ 
\hline
\end{tabular}
\end{center}
\vspace{-2ex}
\caption{\small  The average running time of each algorithm, where $n$ denotes the number of edges of each polygon.   }\label{tab:experiment_result2}
\end{table}

\begin{table}[t]
\begin{center}
\vspace{0ex}
\begin{tabular}{p{.08\textwidth} |p{.05\textwidth}  p{.05\textwidth} p{.05\textwidth} p{.05\textwidth} p{.05\textwidth} p{.05\textwidth} } \hline 
{{\scriptsize  Method}}&{{\scriptsize CGAL }}&{{\scriptsize Berberich}} & {{\scriptsize Naive }}& {{\scriptsize Standard }}&  {{\scriptsize RE2L }} \\\hline
{\scriptsize Time (sec.)}  		&{\scriptsize 0.0112 } & {\scriptsize 0.0127} &{\scriptsize 0.00192} &{\scriptsize 0.00151} &{\scriptsize 0.000948} \\
{\scriptsize Imp. fac.} 	&{\scriptsize 11.814 }  & {\scriptsize 13.396}	 &{\scriptsize 2.025} &{\scriptsize 1.593} &{\scriptsize ---} \\
\hline
\end{tabular}
\end{center}
\vspace{-2ex}
\caption{\small  The average running time when machine floating type is used for all these methods.   }\label{tab:float type}
\end{table}

Table \ref{tab:experiment_result2}  lists the detailed results of Experiment 2.  
The results  show that the proposed method outperforms other four ones as well, and  is several orders of magnitude faster than the former two ones.  
By comparing the differences of these methods, we can easily see that the reason for the larger running time of the former two methods may well be that  both  methods  employ the C{\footnotesize GAL} library and   the D{\footnotesize CEL} structure{\small \footnote{We remark that computing the intersections of two polygons is unavoidable for any clipping algorithm, and it is a dominant step \cite{MarkdeBerg:computational,JamesDFoley:computer,GuntherGreiner:efficient}. Both the former two methods and the ``Standard'' method  adopt the plane sweep method to compute the intersections, their performance  differences however are so great. This reminds us that the gap may well be due to the usage of the C{\scriptsize GAL} library and   the D{\scriptsize CEL} structure (the former might be the major reason).   }}.  
Even so, it is noteworthy that comparing the former two methods with the latter three ones might be not very fair, as the latter three ones  use  \textit{floating point arithmetic} (similar to that in  \cite{GuntherGreiner:efficient,You-DongLiang:AnAnalysis,YongKuiLiu:AnAlgorithm,Patrick-GillesMaillot:aNew,AvrahamMargalit:Analgorithm,Francisco:aNew,YuPeng:aNew,AriRappoport:AnEfficient,MRivero:Boo,BalaRVatti:aGeneric}), whereas the  former two ones  use \textit{exact algebraic arithmetic}, which is more robust. 

To make a more fair comparison,  a simple remedy is to let the input data type be also \textit{machine floating point} for the  former two ones. In the next paragraph, we report our findings when all these methods use machine floating point type.

\subsubsection{Floating Point Number-type}
\label{subsec:floatingPointNumberType}


\begin{table}[t]
\begin{center}
\vspace{0ex}
\begin{tabular}{p{.08\textwidth} |p{.05\textwidth}  p{.05\textwidth} p{.05\textwidth} p{.05\textwidth} p{.05\textwidth} p{.05\textwidth} } \hline 
{{\scriptsize  Method}}&{{\scriptsize CGAL }}&{{\scriptsize Berberich}} & {{\scriptsize Naive }}& {{\scriptsize Standard }}&  {{\scriptsize RE2L }} \\\hline
{\scriptsize Time (sec.)} 		&{\scriptsize 0.07844 }  & {\scriptsize 0.08023} &{\scriptsize 0.00718} &{\scriptsize 0.004652} &{\scriptsize 0.002965} \\
{\scriptsize Imp. fac.} 	&{\scriptsize  26.4553 }  & {\scriptsize 27.058}	 &{\scriptsize  2.4215} &{\scriptsize 1.5689} &{\scriptsize ---} \\
\hline
\end{tabular}
\end{center}
\vspace{-2ex}
\caption{\small  The average running time when the two polygons satisfy the condition $l=m+n$.   }\label{tab:worst case}
\end{table}



Specifically, we use \texttt{double} type as the input data
type for all these methods, recall   Section \ref{subsec:datasets}. After this,   we  generate randomly a pair of circular-arc polygons and then attempt to call these methods. {Unfortunately, the former two methods fail with the
message 'precondition violation' for many inputs.}  To this step, we also attempt to generate many other (pairs of)  circular-arc polygons,  and to test them. {As a result, in most of cases   the runtime exceptions are also reported.}  We also realize that, for a few test data (i.e., circular-arc polygons generated randomly), all these methods can work correctly{\small \footnote{\footnotesize The former two methods are originally designed for the exact algebraic arithmetic, here we use the machine floating point as the input data type;  this could be the  reason why  most of cases cannot work correctly.  }}. For example, when the vertex coordinates of    two input circular-arc polygons (cf., Figure \ref{fig:usecase2})  are   \{(5,2), (\underline{5.125,1}), (6,0.5), (6,3), (4,5), (\underline{2.875,5.25}), (2,4.5), (1,4)\} and \{(2,4), (2,3), (\underline{1.25,2.75}), (1,2), (1,1), (4,1), (4,3),  (\underline{3.25,4})\}, all these methods can work correctly. We remember these two polygons and  run 100 times for each method (using these two polygons as the input). Table \ref{tab:float type}  depicts their average running time. This table shows us that the proposed
method also outperforms its simpler versions.  Again, the simpler versions of the proposed method
also outperforms the former two methods, which is similar to our previous findings (although the improvement factors decrease in terms of the former two methods). This   verifies our original claim in a more justified manner.

\subsubsection{The  Case  ``$l=m+n$'' }\label{subsec:theCASEequal}

Table \ref{tab:worst case} reports the results of  Experiment 4. Interestingly, the proposed method still outperforms  the `Standard' method{\small \footnote{We remark  that the results are similar when the input polygons with more edges are used, omitted for saving space.}} (note:  both  algorithms have the same time complexity in this case, recall Section \ref{sec:complexity}). This phenomenon reveals that (\romannumeral 1) two algorithms with the same time complexity might have different performance results in terms of execution time{\small \footnote{This argument could be another reason  why  the performance of the`Standard' method is significantly different from the former two methods. That is, it is possible that the C{\scriptsize GAL} library used in the former two methods  pays more attention to the robustness and stability, at the expense of the part of the performance.   The further explanation is beyond the topic of this paper.   }};  (\romannumeral 2) here other heuristics or optimization strategies (except the heuristic ``utilizing  related edges'') also bring us  the benefits; and (\romannumeral 3) the   performance gain of other optimization strategies  is larger than the cost consumed by the operator ``choosing related edges''.

\begin{figure}[t]
  \centering
      \includegraphics[width=29.15ex,height=21.23ex]{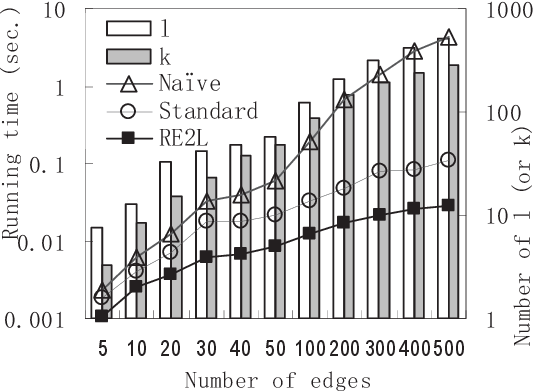} 
            \vspace{-1ex} 
 \caption{\small   The experimental results when we use larger data sets. } 
 \label{fig:addExp1}
\end{figure}

\subsubsection{Scalability}

Figure \ref{fig:addExp1} reports the results when we vary the number of edges (of polygons) from $5$ to $500$.  The columns denote the numbers of intersections and  related edges respectively, whereas the curves denote the running time. 
From this figure, we can  see that the R{\small E2L} has  better scalability compared to its competitors, as  the growth speed of the running time is slower than that of other two ones, when the number of edges   increases.   Compared to the `Standard' method, the better performance of our proposed method  is ascribed to those optimization strategies, and the poorer performance of the `Naive' method is due to that computing the intersections in such a way is (somewhat) inefficient. Especially,  this deficiency is more obvious when the number of edges of polygons is large.

\vspace{-2ex}

\section{Conclusions} \label{sec:conclusion} 

In this paper we investigated the problem of boolean operation on circular-arc polygons. By well considering the nature of the problem, concise  data structure and targeted algorithms were proposed. The proposed method runs in time $O(m+n+(l+k)\log l)$, using $O(m+n+k)$ space. 
We conducted extensive experiments demonstrating  the effectiveness and efficiency of the proposed method, and showed our techniques can be easily extended to compute boolean operation of other types of polygons. We conclude this paper with two research topics: (\romannumeral 1) 
As we know, the multiprocessor and multi-GPU systems are widely used nowadays; it could be interesting  to design efficient parallel algorithms for computing boolean operations of polygons. (\romannumeral 2) As we showed in the paper, our techniques can be easily extended to compute boolean operations of traditional polygons, and circular-ac polygons with holes and self-intersections;  it is still open whether our techniques can be extended to compute boolean operations on conic polygons or more general cases.


{ \normalsize
\bibliographystyle{abbrv}
\bibliography{sample}
}


\end{document}